\newcommand{\extra}[1]{}
\newtheorem{theorem}{Theorem}[section]
\newtheorem{corollary}[theorem]{Corollary}
\newtheorem{lemma}[theorem]{Lemma}
\newtheorem{proposition}[theorem]{Proposition}
\theoremstyle{remark}
\def\squareforqed{\hbox{\rlap{$\sqcap$}$\sqcup$}}
\def\qed{\ifmmode\squareforqed\else{\unskip\nobreak\hfil
\penalty50\hskip1em\null\nobreak\hfil\squareforqed
\parfillskip=0pt\finalhyphendemerits=0\endgraf}\fi}
\def\endenv{\ifmmode\;\else{\unskip\nobreak\hfil
\penalty50\hskip1em\null\nobreak\hfil\;
\parfillskip=0pt\finalhyphendemerits=0\endgraf}\fi}
\renewenvironment{proof}{\noindent \textbf{{Proof~} }}{\qed\medskip}
\newenvironment{proof+}[1]{\noindent \textbf{{Proof #1~} }}{\qed\medskip}
\mathchardef\ordinarycolon\mathcode`\:
\def\vcentcolon{\mathrel{\mathop\ordinarycolon}}
\newcommand{\nc}{\newcommand}
\nc{\rnc}{\renewcommand}
\nc{\beq}{\begin{equation}}
\nc{\eeq}{{\end{equation}}}
\nc{\beqa}{\begin{eqnarray}}
\nc{\eeqa}{\end{eqnarray}}
\nc{\lbar}[1]{\overline{#1}}
\nc{\bra}[1]{\langle#1|}
\nc{\ket}[1]{|#1\rangle}
\nc{\ketbra}[2]{|#1\rangle\!\langle#2|}
\nc{\braket}[2]{\langle#1|#2\rangle}
\nc{\proj}[1]{| #1\rangle\!\langle #1 |}
\nc{\avg}[1]{\langle#1\rangle}
\nc{\smfrac}[2]{\mbox{$\frac{#1}{#2}$}}
\nc{\tr}{\operatorname{tr}}
\nc{\tracedist}[1]{\Delta_{}\!\left( #1 \right)}
\nc{\fid}[1]{F\!\left( #1 \right)}
\nc{\ox}{\otimes}
\nc{\dg}{\dagger}
\nc{\dn}{\downarrow}
\nc{\cA}{{\cal A}}
\nc{\cB}{{\cal B}}
\nc{\cC}{{\cal C}}
\nc{\cD}{{\cal D}}
\nc{\cE}{{\mathcal E}}
\nc{\cF}{{\cal F}}
\nc{\cG}{{\cal G}}
\nc{\cH}{{\cal H}}
\nc{\cI}{{\cal I}}
\nc{\cJ}{{\cal J}}
\nc{\cK}{{\cal K}}
\nc{\cL}{{\cal L}}
\nc{\cM}{{\cal M}}
\nc{\cN}{{\cal N}}
\nc{\cO}{{\cal O}}
\nc{\cP}{{\cal P}}
\nc{\cR}{{\cal R}}
\nc{\cS}{{\cal S}}
\nc{\cT}{{\cal T}}
\nc{\cU}{{\cal U}}
\nc{\cX}{{\cal X}}
\nc{\cY}{{\cal Y}}
\nc{\cZ}{{\cal Z}}
\nc{\entI}{{\bf I}}
\nc{\entIarrow}{{\bf I}^{\leftarrow}}
\nc{\entH}{{\bf H}}
\nc{\entS}{{\bf S}}
\nc{\entHmin}{\mathbf{H}_{\min}}
\nc{\entHmax}{\mathbf{H}_{\max}}
\nc{\binent}{h_2}
\nc{\entF}{{\bf E}_f}
\nc{\isom}{\simeq}
\nc{\rank}{\operatorname{rank}}
\nc{\rar}{\rightarrow}
\nc{\lrar}{\longrightarrow}
\nc{\polylog}{\operatorname{polylog}}
\nc{\poly}{\operatorname{poly}}
\nc{\weight}{\textbf{w}}
\nc{\hamdist}{d_{H}}
\def\e{\epsilon}
\nc{\Sp}{{{\mathbb S}}}
\nc{\RR}{{{\mathbb R}}}
\nc{\CC}{{{\mathbb C}}}
\nc{\FF}{{{\mathbb F}}}
\nc{\NN}{{{\mathbb N}}}
\nc{\ZZ}{{{\mathbb Z}}}
\nc{\PP}{{{\mathbb P}}}
\nc{\QQ}{{{\mathbb Q}}}
\nc{\UU}{{{\mathbb U}}}
\nc{\OO}{{{\mathbb O}}}
\nc{\EE}{{{\mathbb E}}}
\nc{\id}{{\operatorname{id}}}
\nc{\qubitchannel}{\id_2}
\nc{\bitchannel}{\overline{\id}_2}
\nc{\be}{\begin{equation}}
\nc{\ee}{{\end{equation}}}
\nc{\bea}{\begin{eqnarray}}
\nc{\eea}{\end{eqnarray}}
\nc{\Hom}[2]{\mbox{Hom}(\CC^{#1},\CC^{#2})}
\nc{\rU}{\mbox{U}}
\nc{\ob}[1]{#1}
\newcommand{\eqdef}	{\stackrel{\textrm{def}}{=}}
\newcommand{\exc}[2]	{\underset{#1}{\mathbf{E}}\left\{ #2 \right\}}
\newcommand{\prc}[2]	{\underset{#1}{\mathbf{P}}\left\{ #2 \right\}}
\nc{\unif}{\textrm{unif}}
\nc{\inter}{\textrm{int}}
\nc{\ed}{\textrm{ed}}
\nc{\grade}{\mathsf{G}}
\nc{\pguess}{P_{guess}}
\nc{\barA}{\overline{A}}
\nc{\barB}{\overline{B}}
\nc{\barC}{\overline{C}}
\nc{\barD}{\overline{D}}
\nc{\barR}{\overline{R}}
\nc{\barX}{\overline{X}}
\nc{\barY}{\overline{Y}}
\nc{\barU}{\overline{U}}
\nc{\barrho}{\overline{\rho}}
\nc{\barp}{\overline{p}}
\nc{\Pos}{\mathrm{Pos}}
\newcommand{\eps}{\epsilon}
\newcommand{\channel}{\mathrm{S}}
\newcommand{\NS}{\mathrm{NS}}
\newcommand{\capacity}{\mathrm{C}}
\newcommand{\greedy}{\mathrm{greedy}}
\newcommand{\quantum}{\mathrm{Q}}
\newcommand{\dist}{\mathrm{D}}
\begin{document}

\title{Algorithmic Aspects of Optimal Channel Coding}

\author{Siddharth Barman\thanks{Indian Institute of Science. {\tt barman@csa.iisc.ernet.in}} \and Omar Fawzi\thanks{UMR 5668 LIP - ENS Lyon - CNRS - UCBL - INRIA, Universit\'e de Lyon. {\tt omar.fawzi@ens-lyon.fr}}}

\date{}

\maketitle

\begin{abstract}
A central question in information theory is to determine the maximum success probability that can be achieved in sending a fixed number of  messages over a noisy channel. This was first studied in the pioneering work of Shannon who established a simple expression characterizing this quantity in the limit of multiple independent uses of the channel. Here we consider the general setting with only one use of the channel. We observe that the maximum success probability can be expressed as the maximum value of a submodular function. Using this connection, we establish the following results:
\begin{enumerate}
\item There is a simple greedy polynomial-time algorithm that computes a code achieving a $(1-e^{-1})$-approximation of the maximum success probability. The factor $(1-e^{-1})$ can be improved arbitrarily close to $1$ at the cost of slightly reducing the number of messages to be sent. Moreover, it is {\rm NP}-hard to obtain an approximation ratio strictly better than $(1-e^{-1})$ for the problem of computing the maximum success probability.
\item Shared quantum entanglement between the sender and the receiver can increase the success probability by a factor of at most $\frac{1}{1-e^{-1}}$. In addition, this factor is tight if one allows an arbitrary {non-signaling box} between the sender and the receiver.
\item We give tight bounds on the one-shot performance of the {meta-converse of Polyanskiy-Poor-Verd\'{u}}. 
\end{enumerate}
\end{abstract}


\section{Introduction}
One of the central threads of research in information theory is the study of the minimum error probability that can be achieved in sending a fixed number of messages over a noisy channel. This task can be formulated as the maximization---over valid encoders and decoders---of the probability of correctly determining the sent message (which we refer to as success probability for the rest of the paper). In his seminal work, Shannon~\cite{Sha48} showed that for $n$ independent copies of the channel, this question is almost completely answered by a single number, $\capacity$, the capacity of the channel. In fact, for a number of messages $k(n)$ satisfying $\sup_n \frac{\log k(n)}{n} < \capacity$, the maximum success probability tends to $1$ as $n$ tends to infinity, and when $\inf_n \frac{\log k(n)}{n} > \capacity$, the maximum success probability tends to $0$ as $n$ tends to infinity~\cite{Wol57}. 



Here, we study the algorithmic aspects of determining the optimal encoder and decoder which lead to the maximum success probability over a noisy channel in the non-asymptotic regime. Recently, in the information theory literature there has been significant interest in understanding the non-asymptotic behavior when the number of channel uses $n$ is finite~\cite{Str62,Hay09,PPV10,TT13}. But, instead of focusing  on closed-form expressions for the maximum rate at which information can be transmitted, we rather ask how well can the optimal rates of transmission be computed with an efficient algorithm. One way to formulate the computational problem is as follows. The input of the problem is an integer $k$ (which denotes the total number of messages) together with the description of a channel $W$ that maps elements in $X$ to elements in $Y$; specifically, for each $y \in Y$ and $x \in X$, we have $W(y|x)$, which is the probability of receiving the symbol $y$ when the symbol $x$ is transmitted. The objective is to determine the maximum success probability,\footnote{In this paper, we focus on the success probability on average over the $k$ messages.} $\channel(W,k)$, for transmitting $k$ distinct messages using the channel $W$ (only once). This algorithmic formulation leads us to interesting implications that are described below. We refer to this problem of determining the maximum success probability of a given noisy channel as the optimal channel coding problem.




Our more specific findings can be summarized as follows:
\begin{itemize}
\item We observe that the problem of computing the optimal success probability $\channel(W,k)$ corresponds to a submodular maximization problem with cardinality constraints (Proposition~\ref{prop:s_submodular}), which implies that this quantity can be efficiently approximated using a simple greedy algorithm that achieves a $(1-e^{-1})$-approximation ratio. As the maximum coverage problem can be reduced to the optimal channel coding problem, we also find that it is {\rm NP}-hard to approximate $\channel(W,k)$ within a factor larger than $(1-e^{-1})$.
\item The natural linear programming relaxation of the optimal channel coding problem is well-studied in the information theory literature under a different name. It corresponds to the well-known meta-converse of Polyanskiy-Poor-Verd\'{u} (PPV)~\cite{PPV10} which puts an upper bound on the maximum success probability of sending $k$ messages.\footnote{Technically, the PPV bound gives an upper bound on the number of messages for a desired success probability. It is however simple to adapt it to maximizing success probability for a fixed number of messages. The reason for the usefulness of the meta-converse is that it can be analytically evaluated for many settings of interest in particular for $n$ independent channel uses for non-asymptotic values of $n$; see~\cite{Pol10,Tan14} and references therein for an overview of the active area of finite blocklength analysis.} Matthews~\cite{Mat12} showed that this linear programming relaxation corresponds to the maximum success probability, $\channel^{\NS}(W,k)$, when the sender and the receiver are allowed to share any non-signaling box, i.e., a (hypothetical) device taking inputs from both parties and generating outputs for both parties in a manner that makes it, by itself, useless for communication. Our main finding is an upper bound on the integrality gap for the linear programming relaxation of the optimal channel coding problem (Theorem~\ref{thm:ns-advantage}). In particular, for any  channel $W$ and integers $k$ and $\ell$, we have:
\begin{align}
\label{eq:main-ineq}
\frac{k}{\ell}\left(1- e^{-\ell/k} \right) \channel^{\NS}(W, k) \leq \channel(W,\ell) \leq \channel^{\NS}(W,\ell) \ .
\end{align}
When $\ell = k$, this inequality says that the ratio of the optimal success probability and the meta-converse is at least $(1-e^{-1})$. More generally, if a better guarantee on the success probability is desired, this can be achieved at the expense of taking a number of messages $\ell$ that is slightly smaller than $k$. For example, if we take $k = 2^t$ and $\ell = \frac{2^t}{t}$, we obtain after simplification
\begin{align}
\label{eq:main-ineq-ex}
\left(1 - \frac{1}{2 t} \right) \channel^{\NS}(W, 2^{t}) \leq \channel(W, \frac{2^t}{t}) \leq \channel^{\NS}(W,\frac{2^t}{t}) \ .
\end{align}
The bound~\eqref{eq:main-ineq} can be seen as a bicriterion upper bound on the integrality gap, highlighting the tradeoff between the two important parameters: success probability and number of messages. We note that it is important for our applications to analyze the linear programming relaxation and not only understand the performance of the greedy algorithm for the optimal channel coding problem. We give two algorithmic proofs of this result. The first one by analyzing the greedy algorithm which can be done by combining a result of~\cite{CFN77,NWF78} together with an important observation that can be found in~\cite[Theorem 1.5]{KG12}. We provide a self-contained elementary proof of the result. In the second proof, we analyze the coding strategy which is commonly used in achievability bounds: a random code chosen according to a distribution given by the meta-converse. This second analysis is done using standard randomized rounding techniques. Moreover, using a family of examples from~\cite{AS04_pipage} gives a family of channels for which the guarantees in~\eqref{eq:main-ineq} are tight.
\item As quantum entanglement cannot be used for signaling, the inequality~\eqref{eq:main-ineq} puts a limit on the usefulness of entanglement between the sender and the receiver for the problem of coding over a noisy (classical) channel. The fact that entanglement could improve success probabilities was highlighted by~\cite{PLMKR11,CLMW10}. In this paper we in fact obtain an explicit upper bound on the improvement that can be achieved via entanglement.

The bound~\eqref{eq:main-ineq} setting $\ell=k$ addresses a question asked by Hemenway et al.~\cite{HMSW13} who proved that for the special case of transmitting a single bit, i.e., $k=2$, the ratio of the quantum advantage to the classical advantage is at most $2$.\footnote{Note that the case $k=2$, the quantity $\channel(W,2)$ can be written in a very explicit form as a function of the maximum total variation distances between all pairs of distributions $W(.|x_1)$ and $W(.|x_2)$, see~\cite[Proposition 1]{HMSW13} for more details.} For an explicit generalization of their result to arbitrary values of $k$ see inequality~\eqref{eq:centered_ineq}, which is a consequence of our main theorem.
\end{itemize}





\subsection{Outline}
In Section~\ref{sect:submod} we establish that the optimal channel coding problem corresponds to submodular maximization with cardinality constraints. Though, in and of itself, this connection is direct, it provides a useful starting point. In particular, we extend this connection in Section~\ref{sect:ns} to obtain interesting implications for channel coding with non-signaling boxes and in particular quantum entanglement.

\section{Optimal channel coding as a submodular maximization problem}
\label{sect:submod}
Given a noisy channel $W$ whose input and output alphabets are $X$ and $Y$ respectively, along with an integer $k$, our aim is to build an encoder and a decoder that can transmit $k$ distinct messages with the smallest error probability averaged over the messages. We can describe a (possibly randomized) encoder $e$ taking as input message $i \in [k]$ and mapping it to $x \in X$ with probability $e(x|i)$. Similarly, a decoder $d$ takes as input some $y \in Y$ and outputs $i \in [k]$ with probability $d(i|y)$. The maximum success probability $\channel(W,k)$ of sending $k$ messages using the noisy channel $W$ can be written as the following optimization program over encoding and decoding maps.
\begin{align}\label{eq:channel_classical}
\begin{aligned}
\channel(W, k)\eqdef\;& \underset{e, d}{\text{maximize}}
& & \frac{1}{k} \sum_{x,y,i} e(x|i)W(y|x) d(i|y) \\
& \text{subject to}
& &\sum_{x} e(x|i) = 1 \quad \forall i \in [k] \\
& 
& &\sum_{i} d(i|y) = 1 \quad \forall y \in Y \\
& 
& & 0 \leq e(x|i) \leq 1 \quad \forall (x,i) \in X \times [k] \\ 
&
& & 0 \leq d(i|y) \leq 1 \quad \forall (i,y) \in [k] \times Y\,.
\end{aligned}
\end{align}

The next proposition is a simple but important observation that the problem described in \eqref{eq:channel_classical} is of a very well-studied type: maximizing a submodular function subject to a cardinality constraint.

\begin{proposition}
\label{prop:s_submodular}
Let $W$ be a channel, with input alphabet $X$ and output alphabet $Y$, and $k \geq 1$ be an integer. Then we have
\begin{align}
\label{eq:s_submodular}
\channel(W, k) = \frac{1}{k}\max_{S \subseteq X, |S| \leq k} f_W(S) \quad \text{where} \quad f_W : 2^X \to \RR_+ \text{ is defined by } f_W(S) = \sum_{y \in Y} \max_{x \in S} W(y|x) \ .
\end{align}
Moreover, $f$ is monotone and submodular, i.e., for any $S \subseteq T \subset X$ and $x \notin T$,
\begin{align}
\textrm{Monotone: } & \quad f_W(T \cup \{x\})  \geq f_W(T) \\
\textrm{Submodular: } & \quad f_W(S \cup \{x\}) - f_W(S) \geq f_W(T \cup \{x\}) - f_W(T) \ .
\end{align}
\end{proposition}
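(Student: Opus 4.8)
The plan is to reduce the program \eqref{eq:channel_classical} to an optimization over \emph{deterministic} encoders and decoders, recognize the resulting combinatorial quantity as $\frac1k\max_{|S|\le k}f_W(S)$, and then verify the two set-function properties by a termwise argument.

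\textbf{Reduction to deterministic strategies.} The objective $\frac1k\sum_{x,y,i}e(x|i)W(y|x)d(i|y)$ is linear in $e$ for fixed $d$ and linear in $d$ for fixed $e$, and the feasible region is a product of probability simplices (one simplex over $X$ for each message $i\in[k]$, and one simplex over $[k]$ for each $y\in Y$). Starting from any optimal pair $(e^\star,d^\star)$, I would first replace $e^\star$ by a vertex of its simplex product---a deterministic encoder $i\mapsto x_i$---without decreasing the objective, and then replace $d^\star$ by a vertex of its simplex product---a deterministic decoder $y\mapsto i_y$---again without decreasing the objective. Hence the optimum is attained by a deterministic pair.

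\textbf{The combinatorial quantity.} A deterministic encoder is a list of codewords $(x_1,\dots,x_k)\in X^k$; put $S=\{x_1,\dots,x_k\}$, so $|S|\le k$. With the encoder fixed, the objective equals $\frac1k\sum_y W(y|x_{i_y})$, which is maximized over decoders by maximum-likelihood decoding $i_y\in\argmax_i W(y|x_i)$, giving $\frac1k\sum_y\max_i W(y|x_i)=\frac1k\sum_y\max_{x\in S}W(y|x)=\frac1k f_W(S)$. Conversely, every $S\subseteq X$ with $|S|\le k$ is the codeword set of some encoder (map $[k]$ onto $S$, repeating an element when $|S|<k$). Combining the two directions yields \eqref{eq:s_submodular}. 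For the two properties, for each $y\in Y$ set $g_y(S)=\max_{x\in S}W(y|x)$ with $g_y(\emptyset)=0$, so $f_W=\sum_{y\in Y}g_y$. Since adjoining an element to the set over which one maximizes cannot decrease the maximum (here using $W\ge 0$ only to pin down $g_y(\emptyset)=0$), each $g_y$ is monotone, hence so is $f_W$. For submodularity, observe $g_y(S\cup\{x\})-g_y(S)=\bigl(W(y|x)-g_y(S)\bigr)_+$; as $t\mapsto\bigl(W(y|x)-t\bigr)_+$ is nonincreasing and $g_y(S)\le g_y(T)$ whenever $S\subseteq T$, we get $g_y(S\cup\{x\})-g_y(S)\ge g_y(T\cup\{x\})-g_y(T)$, i.e.\ each $g_y$ is submodular; summing over $y$ gives submodularity of $f_W$.

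The argument is elementary and I do not expect a genuine obstacle. The only point needing care is the reduction to deterministic strategies---namely, justifying that one may optimize $e$ and then $d$ \emph{sequentially} over the vertices of the respective simplices---together with the minor bookkeeping that the constraint is $|S|\le k$ rather than $|S|=k$; no generality is lost there because $f_W$ is monotone, so an optimal $S$ can always be taken of size exactly $\min(k,|X|)$.
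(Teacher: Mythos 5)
Your proposal is correct and follows essentially the same route as the paper: the reduction to deterministic encoders/decoders via bilinearity is exactly the paper's chain of inequalities (replace $e$ by the best deterministic encoder, then use maximum-likelihood decoding), and the converse construction from a set $S$ is identical. Your submodularity argument via the identity $g_y(S\cup\{x\})-g_y(S)=\bigl(W(y|x)-g_y(S)\bigr)_+$ and monotonicity of $t\mapsto(c-t)_+$ is just a tidier packaging of the paper's two-case analysis, so there is nothing substantive to add.
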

\begin{proof}
The monotonicity of $f_W$ is clear. For the submodularity, let $S \subseteq T \subset X$. For any $u \notin T$,
\begin{align}
& (f_W(S \cup \{u\}) - f_W(S)) - (f_W(T \cup \{u\}) - f_W(T)) \\
&= \sum_{y} \left( \max_{x \in S \cup \{u\}} W(y|x)  - \max_{x \in S} W(y|x)
- (\max_{x \in T \cup \{u\}} W(y|x) - \max_{x \in T} W(y|x) ) \right) \ .
\end{align}
For each $y \in Y$, we distinguish two cases. If $\max_{x \in T \cup \{u\}} W(y|x) = W(y|u)$, in which case the expression reduces to $\max_{x \in T} W(y|x) - \max_{x \in S} W(y|x) \geq 0$. The second case is that the maximum is achieved in $T$, i.e., $\max_{x \in T \cup \{u\}} W(y|x) = W(y|x)$ for some $x \in T$. In this case $\max_{x \in T \cup \{u\}} W(y|x) - \max_{x \in T} W(y|x) = 0$ and the above expression is clearly non-negative by monotonicity of $f$.

We now show that $\channel(W, k) \geq \frac{1}{k}\max_{S \subseteq X, |S| \leq k} f_W(S)$. For this, choose $S \subseteq X$ of size $\ell \leq k$ (as $f$ is monotone, we can assume that maximum is attained for some $S$ of size exactly $\min(k, |X|)$). Then arbitrarily order the elements in $S = \{x_1, \dots, x_{\ell}\}$. Define $e(x_{i}|i) = 1$ for all $i \in [\ell]$ and set all the other variables $e(.|i)$ for $i \in [\ell]$ to zero. For $i \in \{\ell+1, \dots, k\}$, set $e(x|i)$ in an arbitrary way that satisfies the normalization constraint. Then, for any $y$ define $m(y)$ to be $i \in [\ell]$ such that $W(y|x_i)$ is the maximum (in case of multiple $i$'s with the same maximum value, choose the smallest $i$). Then set $d(m(y)|y) = 1$ for all $y \in Y$ and zero for all other entries in $d$. Clearly $e$ and $d$ satisfy the constraints. Moreover,
\begin{align}
\frac{1}{k} \sum_{i\in [k],x,y} W(y|x) e(x|i) d(i|y) &\geq \frac{1}{k} \sum_{i \in [\ell],y} W(y|x_i) d(i|y) \\
&= \frac{1}{k} \sum_{y} W(y|x_{m(y)}) \\
&= \frac{1}{k} \sum_{y} \max_{x \in S} W(y|x)\ ,
\end{align}
which leads to the desired result.

For the other direction, if we define $x_i$ to be the symbol that maximizes $\sum_{y} W(y|x) d(i|y)$, we have
\begin{align}
\sum_{i,x} e(x|i) \sum_{y} W(y|x) d(i|y) &\leq \sum_{i} \max_{x} \sum_{y} W(y|x) d(i|y) \\
&= \sum_{i} \sum_{y} W(y|x_i) d(i|y) \\
&\leq \sum_{y} \max_{i \in [k]} W(y|x_i) \\
&\leq \max_{S \subseteq X, |S| \leq k} \sum_{y} \max_{x \in S} W(y|x) \ .
\end{align}
\end{proof}

It follows from the proof above that any algorithm computing an optimal $S$ in~\eqref{eq:s_submodular} can be easily transformed into an algorithm computing optimal encoding and decoding functions $e, d$ in~\eqref{eq:channel_classical}, and vice versa. For this reason, we interchangeably talk about the code $S$ and the encoding-decoding pair $(e,d)$.

We note that the expression~\eqref{eq:s_submodular} is well-known in information theory and comes from taking a maximum likelihood decoder, see e.g.,~\cite[Section III]{VCFM14}. It is also worth observing that $\log f_W(S) = I_{\infty}(U_{S} : Y)$ where the joint distribution of $(U_S, Y)$ is defined by $P_{U_S Y}(x,y) = \frac{1}{|S|} W(y|x)$ if $x \in S$ and zero otherwise. $I_{\infty}$ is the mutual information of order $\infty$ (see~\cite{Ver15} for a discussion of $\alpha$-mutual information).

Using the notable result of~\cite{NWF78}, the formulation in \eqref{eq:s_submodular} immediately shows that the quantity $\channel(W,k)$ can be approximated efficiently within a factor of $(1-e^{-1})$. In fact, this can be achieved with a very simple greedy algorithm. Starting with set $S_{0} = \emptyset$, subset $S_{\ell+1} \subseteq X$ is constructed from subset $S_{\ell} \subseteq X$ by adding an element $x_{\ell+1}$ that maximizes $f_W(S_{\ell} \cup \{ x_{\ell+1} \})$, so that $S_{\ell+1} = S_{\ell} \cup \{x_{\ell+1}\}$. Let $\channel^{\greedy}(W,k)$ be the success probability obtained by the greedy algorithm on input $W$ and $k$. With the notation used above and the function $f_W$ defined in \eqref{eq:s_submodular}, we have $\channel^{\greedy}(W,k) = \frac{1}{k} f_{W}(S_{k})$; here $S_k$ is the size-$k$ subset computed by the greedy algorithm. 

\begin{corollary}
\label{cor:greedy-opt}
For any channel $W$ and any $k$,
\begin{align}
(1-e^{-1})\channel(W,k) \leq \channel^{\greedy}(W, k) \leq \channel(W,k) \ .
\end{align}
Moreover, if there is a polynomial-time algorithm that takes as input $W$ and $k$ and returns a number $\mathrm{Alg}(W, k)$ satisfying for some $\e > 0$ and all inputs $W, k$ the inequality 
$(1-e^{-1} + \eps)\channel(W,k) \leq \mathrm{Alg}(W, k) \leq \channel(W,k)$, then {\rm P} = {\rm NP}.
\end{corollary}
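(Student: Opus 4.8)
The plan is to prove the two halves separately. The approximation guarantee $(1-e^{-1})\channel(W,k) \leq \channel^{\greedy}(W,k) \leq \channel(W,k)$ is an immediate consequence of Proposition~\ref{prop:s_submodular} together with the classical Nemhauser--Wolsey--Fisher theorem~\cite{NWF78}: since $f_W$ is monotone submodular with $f_W(\emptyset) = 0$, the greedy algorithm that builds $S_k$ by repeatedly adding the element of largest marginal gain satisfies $f_W(S_k) \geq (1-e^{-1}) \max_{|S| \leq k} f_W(S)$, and dividing through by $k$ gives the claim; the upper bound is trivial since $S_k$ is a feasible code. The greedy step is polynomial-time because each of the $k$ rounds requires evaluating $f_W$ (a sum over $|Y|$ terms, each a maximum over at most $|X|$ values) at most $|X|$ times.

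For the hardness half, I would reduce from \textsc{Maximum Coverage}, which is NP-hard to approximate within any factor larger than $1-e^{-1}$ by Feige's theorem. Given a \textsc{Maximum Coverage} instance — a ground set $\cU = \{u_1,\dots,u_m\}$, subsets $A_1,\dots,A_n \subseteq \cU$, and a budget $k$ — I would construct a channel $W$ with input alphabet $X = [n]$ (one symbol per subset) and output alphabet $Y = \cU$, and set $W(u \mid j) = \frac{1}{|A_j|}$ if $u \in A_j$ and $0$ otherwise; one checks $\sum_{u} W(u\mid j) = 1$, so this is a valid channel. Wait — this makes $f_W(S) = \sum_{u \in \cU}\max_{j \in S} W(u\mid j)$, which weights elements unevenly and is not quite the coverage count. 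To fix this, pad each input symbol so all subsets have the same size: introduce $n\cdot m$ fresh ``dummy'' output symbols, $m$ of them private to each input $j$, and define $W(u \mid j) = \frac{1}{2m}$ for $u \in A_j$, $W(\text{dummy}_{j,t}\mid j) = \frac{1}{2m}\big(1 + \frac{m - |A_j|}{|A_j|}\big)$-style masses chosen so each row sums to $1$ while the dummy masses for distinct inputs land on disjoint outputs and hence contribute a fixed constant to $f_W(S)$ regardless of $S$. Then $f_W(S) = \frac{1}{2m}\big(\text{(number of elements of }\cU\text{ covered by }S) + C\big)$ for an explicit constant $C$ depending only on $n,m$, so $\channel(W,k) = \frac{1}{k}\cdot\frac{1}{2m}(\mathrm{OPT}_{\mathrm{cov}} + C)$.

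The subtlety — and the main obstacle — is that the additive constant $C$ is not scale-invariant: a multiplicative approximation of $\channel(W,k)$ only yields a multiplicative approximation of $\mathrm{OPT}_{\mathrm{cov}} + C$, not of $\mathrm{OPT}_{\mathrm{cov}}$ itself, which is weaker. The standard remedy, which I would carry out, is a gap-amplification / blow-up argument: take many disjoint parallel copies of the covering instance so that $\mathrm{OPT}_{\mathrm{cov}}$ scales up while the relative weight of the constant $C$ can be controlled, or equivalently choose the dummy masses small enough (taking the ``real'' mass $W(u\mid j)$ close to $1$ per covered element after renormalizing the number of copies) that $C$ becomes a negligible fraction of $\mathrm{OPT}_{\mathrm{cov}}$; since Feige's inapproximability holds even when $\mathrm{OPT}_{\mathrm{cov}} = m$ (all elements coverable) versus $\mathrm{OPT}_{\mathrm{cov}} \leq (1-e^{-1}+\eps')m$, one can make the additive slack vanish into the $\eps$ in the statement. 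Finally, from the correspondence in Proposition~\ref{prop:s_submodular} between codes $S$ and encoder--decoder pairs $(e,d)$, the hypothesized algorithm $\mathrm{Alg}$ would distinguish these two cases in polynomial time, forcing $\mathrm{P} = \mathrm{NP}$. I would present the construction, verify the channel normalization and the formula for $f_W$, and then invoke Feige's theorem with the blow-up to close the argument.
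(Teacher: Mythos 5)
The approximation half of your argument (greedy plus Proposition~\ref{prop:s_submodular} plus \cite{NWF78}) is exactly the paper's proof and is fine. The hardness half, however, has a genuine gap in the padding construction. You route the leftover probability mass of each input $j$ onto dummy outputs private to $j$ and claim the dummies ``contribute a fixed constant to $f_W(S)$ regardless of $S$.'' They do not. For $\sum_{u}\max_{j\in S}W(u\mid j)$ to be proportional to the coverage count you must put a common mass $c$ on each covered element, and normalization then forces input $j$ to carry total dummy mass $1-c|A_j|$; the dummies' contribution to $f_W(S)$ is $\sum_{j\in S}\bigl(1-c|A_j|\bigr)$, which depends on $\sum_{j\in S}|A_j|$, i.e.\ on \emph{which} sets you pick, not just on $|S|=k$. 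Maximizing $f_W(S)$ then amounts to maximizing $c\bigl(|\bigcup_{j\in S}A_j|-\sum_{j\in S}|A_j|\bigr)+k$, a different objective that rewards near-disjoint families rather than large coverage. The repairs you sketch also fail: taking $N$ disjoint parallel copies scales the coverage optimum and the spurious term by the same factor, so the relative error does not shrink; and the dummy mass cannot be made uniformly small, since $c\le 1/\max_j|A_j|$ forces inputs with small $|A_j|$ to carry dummy mass close to $1$.

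The paper sidesteps all of this with one fact you are missing: Feige's inapproximability for maximum $k$-coverage holds even when all sets have the same size $d$ (this is why \cite{Fei02} is cited). With $|T_x|=d$ for every $x$, the channel $W(y\mid x)=1/d$ for $y\in T_x$ is already normalized with no dummy outputs, and $f_W(S)=\frac{1}{d}\bigl|\bigcup_{x\in S}T_x\bigr|$ exactly, so the multiplicative approximation factor transfers with no additive slack and no amplification is needed. Replace your padding argument with this equal-set-size reduction.
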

\begin{proof}
The fact that the greedy algorithm provides a $(1-e^{-1})$-approximation algorithm for $\channel$ follows directly from~\cite{NWF78}. In Section~\ref{sect:ns}, we provide a proof of a strengthening of this result.

For the hardness of the problem, we use the hardness of the maximum-$k$-coverage problem~\cite{Fei98}. In the maximum-$k$-coverage problem, we are given a collection of sets $\{T_x\}_{x \in X}$ of elements in $Y$ (i.e., $T_x \subseteq Y$ for each $x \in X$) and the objective is to find a subset $S \subseteq X$ of size $k$ such that $|\cup_{x \in S} T_x|$ is as large as possible. Feige~\cite{Fei98} showed that this problem is hard to approximate with a factor better than $(1-e^{-1})$. In fact, as highlighted in~\cite{Fei02}, the problem is still hard if the sets $T_x$ all have the same size, call it $d$. Given such an instance we define the following channel $W(y|x) = \frac{1}{d}$ if $y \in T_x$ and $0$ otherwise. Then for any choice $S \subseteq X$, we have $| \cup_{x \in S} T_x | = \sum_{y} \max_{x \in S} W(y|x) \cdot d = d \cdot f_W(S)$. This shows the desired result.
\end{proof}

\section{Channel coding with non-signaling boxes}
\label{sect:ns}
A key motivation for this work is to understand the advantage that can be achieved when the sender and the receiver share additional resources that are by themselves useless for communication. Such resources are commonly called non-signaling boxes~\cite{Pop14}. The simplest example of a non-signaling box is a device providing shared randomness between the sender and the receiver. It is quite simple to see that allowing the encoder $e$ and decoder $d$ to depend on some shared randomness will not affect the value of~\eqref{eq:channel_classical}. However, if the sender and the receiver share entanglement, we know that for some channels, a success probability $\channel^{\quantum}(W,k)$ that exceeds $\channel(W,k)$ can be achieved~\cite{PLMKR11,CLMW10}. This is analogous to a Bell inequality violation~\cite{BCPSW14}, or in other words the fact that the entangled value of a 2-prover 1-round game can be larger than the classical value.\footnote{In fact, the problem of optimal channel coding that we are studying can be seen as a kind of game. The input of the sender is the label $i$ of the message and his output is an element $x \in X$. The input of the receiver is some $y \in Y$ and his input is a label $j$ of some message. The difference is that the way one would normalize for a game is different than the way we do it here. Also, in our setting, the referee's output is not necessarily $0$ or $1$ but rather a utility for each input-output pair that depends on the channel probabilities $W(y|x)$.}

A natural question to ask here is how much can entanglement (or, in general, a non-signaling resource) between the sender and receiver help for reliable transmission. For example, is there a choice of channel for which the success probability with entanglement is close to $1$ and without entanglement is very small? Our main result shows that this cannot be the case and that the ratio $\frac{\channel(W,k)}{\channel^{\quantum}(W,k)} \geq 1 - e^{-1}$. As $\channel^{\quantum}(W,k)$ does not seem to be easy to analyze,\footnote{We only know of a hierarchy of semidefinite programs that converges to this value~\cite{BFS15}.} it is helpful to consider even more general resources between the sender and the receiver. Allowing for any non-signaling box between the sender and the receiver leads to a very natural linear programming (LP) relaxation of~\eqref{eq:channel_classical}. 
\begin{align}\label{eq:channel_lp}
\begin{aligned}
\channel^{\NS}(W, k)\eqdef\;& \underset{r_{x,y}, p_x}{\text{maximize}}
& & \frac{1}{k} \sum_{x,y} W(y|x) r_{x,y} \\
& \text{subject to}
& &\sum_{x} r_{x,y} \leq 1 \quad \forall y \in Y \\
& 
& &\sum_{x} p_{x} = k \\
& 
& &  r_{x,y} \leq p_x \quad \forall (x,y) \in X \times Y\\ 
&
& & 0 \leq r_{x,y}, p_{x} \leq 1  \quad \forall (x,y) \in X \times Y\,.
\end{aligned}
\end{align}
Matthews~\cite{Mat12} showed that $\channel^{\NS}(W,k)$ corresponds to the maximum success probability that can be achieved if the sender and the receiver are allowed to use any additional non-signaling resource (see also~\cite[Section II.B]{Pol13} for another explanation of this fact.) This means that they can use any additional box as long as it does not allow the sender and the receiver to send information to each other; see~\cite{Pop14} and references therein for a review on the study of non-signaling boxes. For the convenience of the reader, we repeat the proof of~\cite{Mat12} in Appendix~\ref{sec:lp_ns}.\footnote{Note that we also have a small modification compared to the LP of~\cite{Mat12} in that we require $p_x \leq 1$. We can safely add this constraint provided $k \leq |X|$, which we assume throughout this section.}

As pointed out in~\cite{Mat12}, a form of the LP~\eqref{eq:channel_lp} is widely known in the information theory literature as the Polyanskiy-Poor-Verd\'u meta-converse~\cite{PPV10}. The PPV meta-converse gives an upper bound for the number of messages that can be sent through a channel in terms of some hypothesis test; a connection which also appeared in~\cite{HN03}. In Appendix~\ref{sec:lp_ht}, we basically repeat the argument of~\cite{Mat12} to show how to interpret the LP~\eqref{eq:channel_lp} in terms of a hypothesis test.

Our main result shows that the LP relaxation~\eqref{eq:channel_lp} cannot be too far from the maximum success probability in~\eqref{eq:channel_classical}.
\begin{theorem}
\label{thm:ns-advantage}
Let $W$ be a channel. Then, for any integers $\ell, k \in \{1, \dots, |X|\}$ we have:
\begin{align}
\label{eq:ns-advantage}
\channel(W,\ell) &\geq \frac{k}{\ell} \left( 1 - \left(1 - \frac{1}{k}\right)^{\ell} \right) \channel^{\NS}(W, k) \ .
\end{align}
More precisely, this can be achieved via the greedy algorithm
\begin{align}
\label{eq:ineq-greedy-lp}
\channel^{\greedy}(W,\ell) &\geq \frac{k}{\ell} \left( 1 - \left(1 - \frac{1}{k}\right)^{\ell} \right) \channel^{\NS}(W, k) \ ,
\end{align}
or via random coding
\begin{align}
\label{eq:ineq-random-lp}
\exc{S}{\frac{1}{\ell} f_{W}(S)} &\geq \frac{k}{\ell} \left( 1 - \left(1 - \frac{1}{k}\right)^{\ell} \right) \channel^{\NS}(W, k) \ ,
\end{align}
where $S \subseteq X$ is obtained by choosing $\ell$ elements of $X$ independently according to the distribution $\{\frac{p_{x}}{k}\}_{x \in X}$, where $p_x$ is an optimal solution in \eqref{eq:channel_lp} and $f_{W}$ is as defined in Proposition~\ref{prop:s_submodular}.
\end{theorem}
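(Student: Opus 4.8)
The plan is to reduce both claims~\eqref{eq:ineq-greedy-lp} and~\eqref{eq:ineq-random-lp} to properties of a single fixed optimal solution $(r_{x,y}, p_x)$ of the linear program~\eqref{eq:channel_lp}, and then to note that~\eqref{eq:ns-advantage} follows from either of them, since $\channel(W,\ell) = \frac1\ell\max_{|S|\le\ell}f_W(S)$ is at least both $\frac1\ell f_W(S_\ell)$ and $\exc{S}{\frac1\ell f_W(S)}$ (the random code $S$ always has at most $\ell$ elements). Throughout write $L^\star \eqdef \sum_{x,y}W(y|x)r_{x,y} = k\,\channel^{\NS}(W,k)$.

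\textbf{The greedy bound.} The heart of the matter is the inequality: for every $S\subseteq X$,
\begin{align}
\max_{x\in X}\bigl(f_W(S\cup\{x\}) - f_W(S)\bigr) \;\ge\; \tfrac1k\bigl(L^\star - f_W(S)\bigr) \ .
\end{align}
To prove it I would fix $S$, set $M(y) = \max_{x\in S}W(y|x)$ (so $f_W(S) = \sum_y M(y)$), and bound, using $W(y|x) \le M(y) + \max(0, W(y|x)-M(y))$ together with $\sum_x r_{x,y}\le 1$ and $0\le r_{x,y}\le p_x$,
\begin{align}
L^\star \;=\; \sum_y\sum_x W(y|x)r_{x,y} \;\le\; \sum_y M(y) + \sum_x p_x\sum_y\max\bigl(0, W(y|x)-M(y)\bigr) \;=\; f_W(S) + \sum_x p_x\,\Delta_x \ ,
\end{align}
where $\Delta_x \eqdef f_W(S\cup\{x\}) - f_W(S) = \sum_y\max(0,W(y|x)-M(y))$; since $\sum_x p_x = k$ the right side is at most $f_W(S) + k\max_x\Delta_x$, which rearranges to the claim. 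Applying this to $S = S_j$ along the greedy run gives $f_W(S_{j+1}) \ge (1-\tfrac1k)f_W(S_j) + \tfrac1k L^\star$, and since $f_W(\emptyset)=0$ an immediate induction yields $f_W(S_\ell)\ge\bigl(1-(1-\tfrac1k)^\ell\bigr)L^\star$; dividing by $\ell$ gives~\eqref{eq:ineq-greedy-lp}.

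\textbf{The random-coding bound.} Here I would work one output symbol $y$ at a time. Order $X = \{x_1,\dots,x_n\}$ so that $w_j\eqdef W(y|x_j)$ is non-increasing, set $w_{n+1}=0$, and use the telescoping identity $\max_{x\in S}W(y|x) = \sum_{j=1}^n (w_j - w_{j+1})\,\ind{S\cap\{x_1,\dots,x_j\}\neq\emptyset}$. Taking expectations over the $\ell$ i.i.d.\ draws from $\{p_x/k\}$, and writing $P_j \eqdef \frac1k\sum_{i\le j}p_{x_i}$,
\begin{align}
\exc{S}{\max_{x\in S}W(y|x)} \;=\; \sum_{j=1}^n (w_j-w_{j+1})\Bigl(1-(1-P_j)^\ell\Bigr) \ ,
\end{align}
while the same telescoping applied to $\sum_x W(y|x)r_{x,y}$ gives $\sum_{j}(w_j-w_{j+1})R_j$ with $R_j\eqdef\sum_{i\le j}r_{x_i,y}$, and the constraints give $R_j\le\min(1,kP_j)$ (from $r_{x,y}\le p_x$ and $\sum_x r_{x,y}\le 1$). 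Since $w_j-w_{j+1}\ge 0$, it suffices to check, for $P=P_j\in[0,1]$, the scalar inequality $1-(1-P)^\ell \ge \bigl(1-(1-\tfrac1k)^\ell\bigr)\min(1,kP)$: when $kP\ge1$ it is equivalent to $1-P\le 1-\tfrac1k$, and when $kP<1$ it follows because $P\mapsto 1-(1-P)^\ell$ is concave and vanishes at $0$, so $\bigl(1-(1-P)^\ell\bigr)/P$ is non-increasing and hence at least its value $k\bigl(1-(1-\tfrac1k)^\ell\bigr)$ at $P=\tfrac1k$. Summing over $y$ gives $\exc{S}{f_W(S)}\ge\bigl(1-(1-\tfrac1k)^\ell\bigr)L^\star$, i.e.\ \eqref{eq:ineq-random-lp}.

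\textbf{Main obstacle.} The routine parts are the greedy recursion and the telescoping bookkeeping; the two genuine steps are (i) the marginal-gain lemma, which is where all three LP constraints $\sum_x r_{x,y}\le1$, $r_{x,y}\le p_x$, $\sum_x p_x=k$ get used simultaneously, and (ii) the scalar inequality $1-(1-P)^\ell\ge(1-(1-\tfrac1k)^\ell)\min(1,kP)$ together with the reduction $R_j\le\min(1,kP_j)$ that feeds it. I expect (i)/(ii) --- essentially the same phenomenon seen from two angles --- to be the crux, with everything else mechanical.
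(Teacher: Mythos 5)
Your proposal is correct and follows essentially the same two routes as the paper: the greedy argument rests on the same marginal-gain lemma (your inequality $\max_x\Delta_x\ge\frac1k(L^\star-f_W(S))$ is the paper's Lemma~\ref{lem:fractional-greedy}, derived from the same split of $W(y|x)$ into the part dominated by $\max_{x'\in S}W(y|x')$ plus the excess, using the three LP constraints in the same places), and the random-coding argument is the same telescoping-plus-concavity computation. The only cosmetic difference is that you work directly with a fixed optimal LP solution rather than introducing the fractional extension $f_W(p)$ and the auxiliary vector $q$, which is a harmless simplification.
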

Figure~\ref{fig:s_of_l} gives an illustration for the statement of the theorem.
	\begin{figure}
	\centering
		\includegraphics[width=10cm]{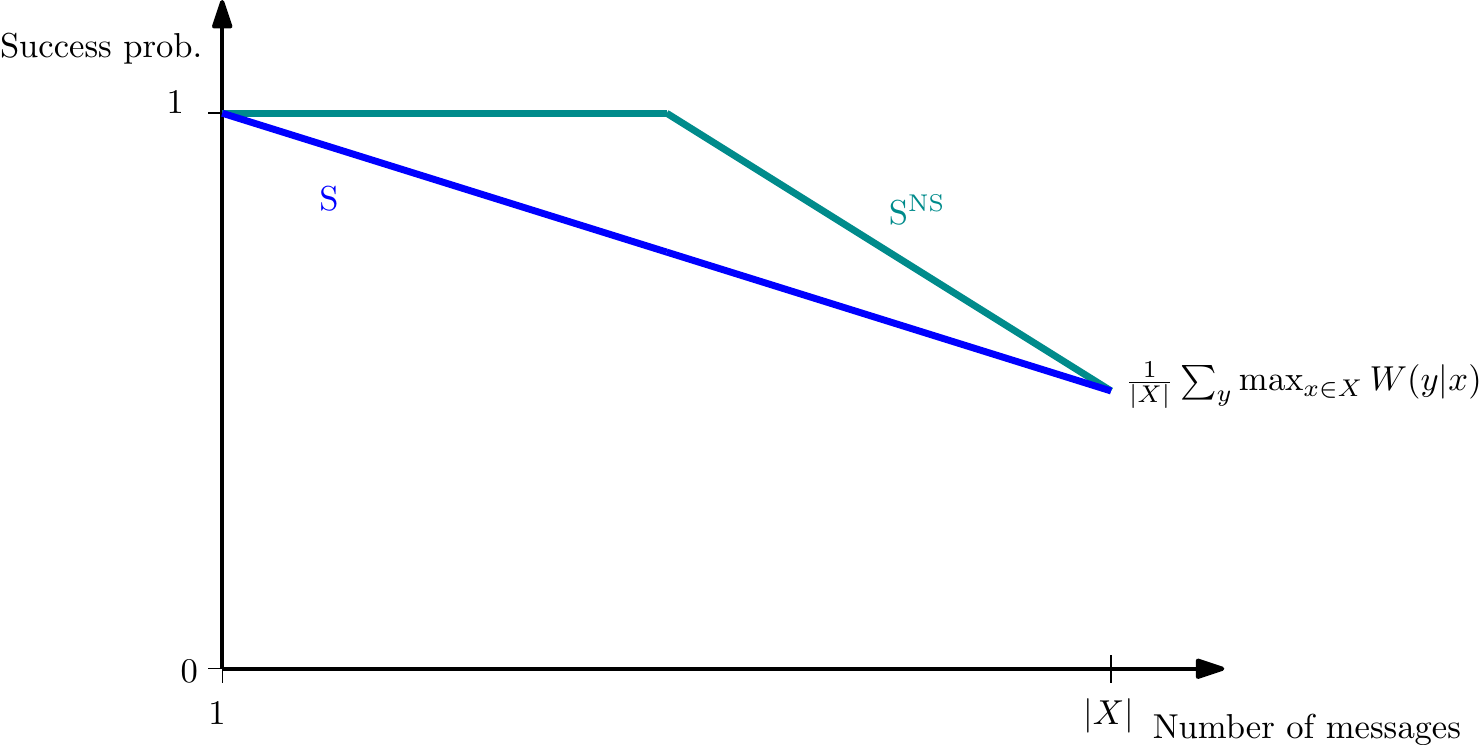}
		\caption{This plot illustrates the maximum success probability as a function of the number of messages to be sent. The top curve corresponds to the maximum success probability if any non-signaling boxes between the sender and the receiver is allowed. The bottom one corresponds to the setting where the sender and the receiver do not share any additional resources. Theorem~\ref{thm:ns-advantage} states that for any value $\ell$ of the number of messages, the ratio between the non-signaling success probability and the usual success probability is at most $1-e^{-1}$. It also gives a way of comparing the two values for different number of messages to be sent.}
		\label{fig:s_of_l}
	\end{figure}
%
%
%
%
%
%
%

\paragraph{Comments on the proof}
As mentioned in the theorem, we give two proofs of this result. The first one by relating the performance of the greedy algorithm to the linear program~\eqref{eq:channel_lp}. The $\ell = k$ case of this theorem can be proved by using a result of~\cite{CFN77} relating the performance of the greedy algorithm and the linear programming relaxation for the location problem. In fact, the expression in~\eqref{eq:s_submodular} shows that the optimal channel coding problem can be written as a location problem. To obtain the tradeoff between success probability and number of messages, we use the observation about the greedy algorithm that can be found in~\cite[Theorem 1.5]{KG12}. A complete proof of this theorem appears in Section~\ref{sec:proof-greedy}. The second proof proceeds by standard randomized rounding techniques and can be found in Section~\ref{sec:proof-random}

Inequality \eqref{eq:ineq-greedy-lp} (for $\ell = k$) together with the fact that both $\channel^{\greedy}(W, k)$ and $\channel^{\NS}(W,k)$ are computable in polynomial time might seem at first in contradiction with the NP-hardness in Corollary~\ref{cor:greedy-opt}. But this is not the case because it is unclear how to use the linear program $\channel^{\NS}(W,k)$ to obtain a lower bound on $\channel(W, k)$ that is better than the greedy algorithm.
Another consequence of~\eqref{eq:ineq-greedy-lp} is that it proves that the greedy algorithm gives a $(1-e^{-1})$ approximation even for the maximum success probability $\channel^{\quantum}(W,k)$ using entanglement.

Another relevant observation concerning the proof is that we can in fact obtain a multiplicative bound between centered quantities. As $\channel(W,k) \geq \frac{1}{k}$ for any channel $W$ (the decoder can just randomly guess a message), it might be useful to consider the ratio $\channel(W,k) - \frac{1}{k}$ to $\channel^{\NS}(W,k) - \frac{1}{k}$, in particular for small values of $k$. A simple modification in the proof leads to the bound:
\begin{align}
\label{eq:centered_ineq}
\channel(W,k) - \frac{1}{k} \geq \left(1 - \left(1 - \frac{1}{k}\right)^{k-1}\right) \left(\channel^{\NS}(W,k) - \frac{1}{k} \right) \ .
\end{align} 
This inequality generalizes the bounds obtained by Hemenway et al.~\cite{HMSW13}, who considered the case $k=2$.

\paragraph{Comments on the ratio}
The pre-factor in the right hand side of~\eqref{eq:ns-advantage} can be simplified using the following inequalities
\begin{align}
\frac{k}{\ell} \left( 1 - \left(1 - \frac{1}{k}\right)^{\ell} \right)
\geq \frac{k}{\ell} \left( 1 - e^{-\ell/k} \right)
\end{align}
and when $\ell/k \ll 1$, a good approximation is given by the following inequality
\begin{align}
\label{eq:expansion}
\frac{k}{\ell} \left( 1 - e^{-\ell/k} \right) \geq 1 - \frac{\ell}{2k} \ .
\end{align}
The bound shows, for example, that if it is possible to send $\log k$ bits of information with a success probability $\channel^{\NS}(W, k) = 1-\e$ using non-signaling boxes, then it is also possible to send $\log k - 10$ bits of information with a success probability of at least $0.998 \cdot (1-\e)$ without using any additional resources.  
We show in Section~\ref{sec:tightness} that the bound~\eqref{eq:ns-advantage} is tight.


\paragraph{Multiple independent uses of a channel}
It has been known for a long time that asymptotically for $n$ channel uses with $n \to \infty$, entanglement cannot increase the capacity $\capacity(W)$ of a noisy classical channel $W$~\cite{BSST99,BBCJPW93}. In fact, this is easily recovered from Theorem~\ref{thm:ns-advantage}. Let $R$ be a rate achievable using a non-signaling box, i.e., $\channel^{\NS}(W^{\otimes n}, R^n) \to 1$ as $n \to \infty$. We will show that $R \leq \capacity(W)$. In fact, let $\delta > 0$. Then using Theorem~\ref{thm:ns-advantage} with the 
\begin{align}
\channel(W^{\otimes n}, (R(1-\delta))^n) &\geq \left(1 - \frac{1}{2} \left(\frac{R(1-\delta)}{R}\right)^n\right) \channel^{\NS}(W^{\otimes n}, R^n) \ ,
\end{align}
and thus $\channel(W^{\otimes n}, R^n(1-\delta)^n) \to 1$, which shows that $R(1-\delta)$ is an achievable rate for the channel $W$ (without using any non-signaling boxes). By definition of the capacity $\capacity(W)$, this means that $R(1 - \delta) \leq \capacity(W)$. As this holds for any $\delta$, this shows that $R \leq \capacity(W)$.


\subsection{Proof of Theorem~\ref{thm:ns-advantage} via the greedy algorithm}
\label{sec:proof-greedy}
In comparison with Corollary~\ref{cor:greedy-opt}, we prove a stronger result relating the performance of the greedy algorithm to the value of the linear programming relaxation. For that, it is useful to define the following extension of the function $f_W$ defined in~\eqref{eq:s_submodular} to fractional vectors $p \in [0,1]^{|X|}$,
\begin{align}\label{eq:f_fractional}
\begin{aligned}
f_{W}(p) \eqdef\;& \underset{r_{x,y}}{\text{maximize}}
& & \sum_{x,y} W(y|x) r_{x,y} \\
& \text{subject to}
& &\sum_{x} r_{x,y} \leq 1 \quad \forall y \in Y \\
& 
& & 0 \leq r_{x,y} \leq p_x \quad \forall (x,y) \in X \times Y\,.
\end{aligned}
\end{align}
With this notation, we can write 
\begin{align}
\label{eq:ns_f}
\channel^{\NS}(W, k) = \frac{1}{k} \max\limits_{\substack{p_x \geq 0 \\ \sum_{x} p_x = k}} f_W(p) \ .
\end{align}
The following lemma is crucial in proving Theorem~\ref{thm:ns-advantage}.
\begin{lemma}
\label{lem:fractional-greedy}
Let $S \subseteq X$ and vector $p \in [0,1]^{|X|}$ such that $\sum_{x} p_x = k$. Then 
\begin{align*}
f_W(p) \leq f_W(S) + k(\max_{x \in X} f_W(S \cup \{x\}) - f_W(S) )
\end{align*}
\end{lemma}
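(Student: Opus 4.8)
The plan is to bound $f_W(p)$ from above by comparing it to $f_W(S)$ plus a correction term controlled by the maximum marginal gain over adding a single element. First I would fix an optimal solution $r = (r_{x,y})$ for the linear program defining $f_W(p)$ in \eqref{eq:f_fractional}. The key structural fact I want to exploit is that the objective $\sum_{x,y} W(y|x) r_{x,y}$ can be reorganized by output symbol $y$: for each $y$, the contribution is $\sum_x W(y|x) r_{x,y}$, and since $\sum_x r_{x,y} \leq 1$ and $r_{x,y} \le p_x$, this is a weighted average (with total weight at most $1$) of the values $W(y|x)$. So for each $y$, the contribution is bounded by a convex combination of the $W(y|x)$ with weights $r_{x,y}$, which I will want to split into the part coming from $x \in S$ and the part coming from $x \notin S$.

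The main step is to handle the part of the mass sitting outside $S$. For each $x \notin S$, the marginal gain $f_W(S \cup \{x\}) - f_W(S) = \sum_y \bigl( \max(W(y|x), \max_{x' \in S} W(y|x')) - \max_{x' \in S} W(y|x') \bigr) = \sum_y \bigl( W(y|x) - \max_{x' \in S} W(y|x') \bigr)_+$, where $(\cdot)_+$ denotes the positive part. I would then argue that for each $y$, the contribution $\sum_{x \notin S} W(y|x) r_{x,y}$ exceeds $\bigl(\sum_{x \notin S} r_{x,y}\bigr) \cdot \max_{x' \in S} W(y|x')$ only by at most $\sum_{x \notin S} r_{x,y} \bigl(W(y|x) - \max_{x'\in S} W(y|x')\bigr)_+ \le \sum_{x \notin S} p_x \bigl(W(y|x) - \max_{x'\in S}W(y|x')\bigr)_+$. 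Summing over $y$, the total excess over what one would get by routing all the outside mass at value $\max_{x'\in S} W(y|x')$ is at most $\sum_{x \notin S} p_x \bigl(f_W(S\cup\{x\}) - f_W(S)\bigr) \le \bigl(\sum_{x \notin S} p_x\bigr) \max_{x \in X}\bigl(f_W(S\cup\{x\}) - f_W(S)\bigr) \le k \cdot \max_{x \in X}\bigl(f_W(S\cup\{x\})-f_W(S)\bigr)$, using $\sum_x p_x = k$ and $p_x \geq 0$. The remaining piece — mass on $x \in S$ plus the $\max_{x'\in S}W(y|x')$-valued routing of the outside mass — needs to be bounded by $f_W(S) = \sum_y \max_{x \in S} W(y|x)$; this follows because for each $y$ the total routed mass is $\le 1$, so it is a sub-convex combination of values each at most $\max_{x\in S} W(y|x)$ (the values $W(y|x)$ for $x \in S$ are all $\le \max_{x\in S}W(y|x)$, and the outside mass is explicitly assigned that same ceiling value).

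The step I expect to be the most delicate is the bookkeeping in the per-$y$ decomposition: one has to be careful that when $S$ is empty or when all the LP mass at a given $y$ lands on $S$, the ``excess'' term is genuinely nonnegative and the inequality $\sum_{x\notin S} r_{x,y} W(y|x) \le \bigl(\sum_{x\notin S} r_{x,y}\bigr)\max_{x'\in S}W(y|x') + \sum_{x\notin S} p_x (W(y|x)-\max_{x'\in S}W(y|x'))_+$ still makes sense (for $S = \emptyset$ one interprets $\max_{x'\in\emptyset}W(y|x') = 0$, and then the bound reduces to the trivial $\sum_{x\notin S} r_{x,y}W(y|x) \le \sum_x p_x W(y|x)$, consistent with $f_W(\emptyset)=0$). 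Once the per-$y$ inequalities are in hand, summing over $y$ and collecting terms gives exactly $f_W(p) \le f_W(S) + k\,\max_{x\in X}\bigl(f_W(S\cup\{x\}) - f_W(S)\bigr)$, as claimed.
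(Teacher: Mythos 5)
Your proposal is correct and follows essentially the same argument as the paper: for each output $y$ you split the LP mass $r_{\cdot,y}$ into the part on $S$ (absorbed into $f_W(S)$ using $\sum_x r_{x,y}\le 1$) and the excess from $x\notin S$, which you bound via $r_{x,y}\le p_x$ by the positive parts $\bigl(W(y|x)-\max_{x'\in S}W(y|x')\bigr)_+$ that sum over $y$ to the marginal gains $f_W(S\cup\{x\})-f_W(S)$, and then by $k$ times the best single-element gain since $\sum_x p_x=k$. The only differences from the paper's write-up are cosmetic (you work with $p$ directly rather than the auxiliary vector $q$, and use positive parts in place of the sets $\Gamma(x)$).
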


\begin{proof}
Define for any $x \in X$, $q_x = \max\{p_x, 1_{x \in S},0\}$. Note that $q_x \geq p_x$ for all $x$, and so $f_W(p) \leq f_W(q)$. We aim to show the stronger statement $f_W(q) \leq f_W(S) + k  (\max_{x \in X} f_W(S \cup \{x\}) - f_W(S) )$. Let $r_{x,y}$ correspond to an optimal solution for the program of $f_{W}(q)$, in particular $f_{W}(q) = \sum_{x,y} W(y|x) r_{x,y}$. We have
\begin{align}
&f_W(q) - f_W(S) = \sum_{x,y} W(y|x) r_{x,y} - \sum_{y} \max_{x' \in S} W(y|x') \\
		&= \sum_{y} \left(\sum_{x} W(y|x) r_{x,y} - \max_{x' \in S} W(y|x') \right) \\
		&\leq \sum_{y} \left(\sum_{x \in S} W(y|x) r_{x,y} - (\sum_{x \in S} r_{x,y}) \max_{x' \in S} W(y|x')  + \sum_{x \notin S} W(y|x) r_{x,y} - (\sum_{x \notin S} r_{x,y}) \max_{x' \in S} W(y|x') \right) \ ,
		\end{align}
		using the fact that $\sum_{x} r_{x,y} \leq 1$. Now observe that $\sum_{x \in S} W(y|x) r_{x,y} - (\sum_{x \in S} r_{x,y}) \max_{x' \in S} W(y|x') \leq 0$ and thus
		\begin{align}
		f_W(q) - f_W(S) &\leq \sum_{y} \left(\sum_{x \notin S} W(y|x) r_{x,y} - (\sum_{x \notin S} r_{x,y}) \max_{x' \in S} W(y|x') \right) \\
		&= \sum_{x \notin S} \sum_{y} r_{x,y} (W(y|x) - \max_{x' \in S} W(y|x')) \\
		&\leq \sum_{x \notin S} \sum_{y \in \Gamma(x)} r_{x,y} (W(y|x) - \max_{x' \in S} W(y|x'))
		\label{eq:diff-xstar} 
		\ ,
\end{align}
where $\Gamma(x) = \{ y : W(y|x) - \max_{x' \in S} W(y|x') > 0 \}$.
For $x \notin S$, we have $q_{x} = p_{x}$ and thus $r_{x,y} \leq p_{x}$. As a result,
\begin{align}
f_W(q) - f_W(S) &\leq \sum_{x \notin S} p_{x} \sum_{y \in \Gamma(x)} (W(y|x) - \max_{x' \in S} W(y|x')) \\
&\leq k \cdot \sum_{y \in \Gamma(x^*)} ( W(y|x^*) - \max_{x' \in S} W(y|x') ) \ \label{eq:final-diff},
\end{align}
where $x^* \notin S$ maximizes the quantity $\sum_{y \in \Gamma(x)} (W(y|x) - \max_{x' \in S} W(y|x'))$. Now observe that by definition of $\Gamma(x^*)$, we have
\begin{align}
f_W(S \cup \{x^*\}) - f_W(S) &= \sum_{y \in \Gamma(x^*)} ( W(y|x^*) - \max_{x' \in S} W(y|x') ) .
\end{align}
Combining this with \eqref{eq:final-diff}, we get the desired result.
\end{proof}


\begin{proof}[of Theorem~\ref{thm:ns-advantage}, Eq. \eqref{eq:ineq-greedy-lp}]
Using Lemma~\ref{lem:fractional-greedy}, we can apply the framework of~\cite{NWF78} for analyzing the performance of the greedy algorithm. Recall the notation introduced for the greedy algorithm:  starting from $S_{0} = \emptyset$, $S_{\ell+1}$ is constructed from $S_{\ell}$ by adding an element $x_{\ell+1}$ that maximizes $f_W(S_{\ell} \cup \{ x_{\ell+1} \})$, so that $S_{\ell+1} = S_{\ell} \cup \{x_{\ell+1}\}$.
Fix an integer $i_0 \in \{0, \dots, |X|\}$ (think of $i_0 = 0$, but it will also be useful to choose $i_0 = 1$). We prove by induction on $\ell$ that for any $\ell \geq i_0$
\begin{align}
\label{eq:induction}
\max_{p} f_W(p) - f_W(S_{\ell}) \leq \left(1 - \frac{1}{k} \right)^{\ell-i_0} \left(\max_{p} f_W(p) - f_W(S_{i_0}) \right) \ ,
\end{align}
where the maximizations are taken over all $p$ such that $p_x \geq 0$ and $\sum_{x} p_x = k$.
The base case $\ell = i_0$ is clear. Using Lemma~\ref{lem:fractional-greedy} together with the fact that $f_W(S_{\ell+1}) = \max_{x^* \in X} f_W(S_{\ell} \cup \{x^*\})$ gives
\begin{align}
\max_p f_W(p) - f_W(S_{\ell}) \leq k \cdot (f_W(S_{\ell+1}) - f_W(S_{\ell}) ) \ .
\end{align}
Rearranging the terms we see that
\begin{align}
\max_p f_W(p) - f_W(S_{\ell+1}) &\leq \left(1 - \frac{1}{k}\right) \left( \max_p f_W(p) - f_W(S_{\ell}) \right) \\
&\leq \left(1-\frac{1}{k}\right)^{\ell + 1 - i_0} \left( \max_p f_W(p) - f_W(S_{i_0}) \right) \ ,
\end{align}
using the induction hypothesis. For $i_0=0$, this gives can be written as
\begin{align}
f_W(S_\ell) \geq \left( 1 - \left(1-\frac{1}{k}\right)^{\ell} \right) \max_p f_W(p) \ .
\end{align}

Dividing by $\ell$ and using~\eqref{eq:ns_f}, we have
\begin{align}
\channel(W,\ell) \geq \channel^{\greedy}(W, \ell) = \frac{1}{\ell} f_W(S_{\ell}) &\geq \frac{1}{\ell} \left( 1-\left(1- \frac{1}{k}\right)^{\ell} \right) \max_{p} f_W(p) \\ 
& = \frac{k}{\ell} \left( 1-\left(1- \frac{1}{k}\right)^{\ell} \right) \channel^{\NS}(W, k) \ .
\end{align}
This concludes the proof of the theorem. To prove~\eqref{eq:centered_ineq}, take $i_0 = 1$, observe that $f_W(S_1) = 1$ and set $\ell = k$, \eqref{eq:induction} becomes
\begin{align}
\frac{\max_p f_W(p) - f_W(S_{k})}{\max_p f_W(p) - 1} \leq \left( 1 - \frac{1}{k} \right)^{k-1} \ .
\end{align}
Rearranging the terms, we get the inequality~\eqref{eq:centered_ineq}.
\end{proof}

\subsection{Proof of Theorem~\ref{thm:ns-advantage} via random coding}
\label{sec:proof-random}

\begin{proof}[of Theorem~\ref{thm:ns-advantage}, Eq. \eqref{eq:ineq-random-lp}]
Recall that $S$ is obtained by choosing $\ell$ independent samples from the distribution $\{\frac{p_{x}}{k}\}_{x \in X}$. We have
\begin{align*}
\exc{S}{\frac{1}{\ell} f_{W}(S)} &= \frac{1}{\ell} \sum_{y} \exc{S}{\max_{x \in S} W(y|x)} \ .
\end{align*}
We will show that for any $y \in Y$, $\exc{S}{\max_{x \in S} W(y|x)} \geq \sum_{x \in X} \left( 1 - \left(1 - \frac{1}{k}\right)^{\ell} \right) W(y|x) r_{x,y}$.

In order to show this, we order the inputs $\{x_1, \dots, x_{|X|}\}$ in $X$ by decreasing order of $W(y|x)$, so that $W(y|x_1) \geq W(y|x_2) \geq \dots \geq W(y|x_{|X|})$. We may then write
\begin{align*}
\exc{S}{\max_{x \in S} W(y|x)} &= \sum_{i=1}^{|X|} \prc{S}{x_{1} \notin S \cap \dots \cap x_{i-1} \notin S \cap x_{i} \in S} W(y|x_i) \ .
\end{align*}
Note that we used the convention that $x_0 \in S$ always holds. Observe that 
\begin{align*}
\prc{S}{x_{1} \notin S \cap \dots \cap x_{i-1} \notin S \cap x_{i} \in S} = \prc{S}{x_1 \in S \cup \dots \cup x_{i} \in S} - \prc{S}{x_1 \in S \cup \dots \cup x_{i-1} \in S} \ .
\end{align*}
Thus, we obtain
\begin{align}
\label{eq:random_code_term_y}
\exc{S}{\max_{x \in S} W(y|x)} &= \sum_{i=1}^{|X|-1} \prc{S}{x_{1} \in S \cup \dots \cup x_{i} \in S} (W(y|x_i) - W(y|x_{i+1})) + W(y|x_{|X|}) \ .
\end{align}
Now we can find a lower bound on $\prc{S}{x_{1} \in S \cup \dots \cup x_{i} \in S}$ using the variables $p_{x}$ and $r_{x,y}$ of the linear program. In fact, 
\begin{align*}
\prc{S}{x_{1} \in S \cup \dots \cup x_{i} \in S} &= 1 - \left(1 - \frac{p_{x_1} + \dots + p_{x_i}}{k}\right)^{\ell} \\
&\geq 1 - \left(1 - \frac{r_{x_1,y} + \dots + r_{x_i,y}}{k}\right)^{\ell} \\
&\geq \left( 1 - \left(1 - \frac{1}{k}\right)^{\ell} \right) (r_{x_1,y} + \dots + r_{x_i,y}) \ ,
\end{align*}
where the first inequality comes from the constraint $r_{x,y} \leq p_{x}$ and the last inequality from the constraint $\sum_{x} r_{x,y} \leq 1$ and the concavity of the function $z \mapsto 1 - (1 - \frac{z}{k})^{\ell}$. Going back to \eqref{eq:random_code_term_y}, we obtain
\begin{align}
\exc{S}{\max_{x \in S} W(y|x)} &\geq \left( 1 - \left(1 - \frac{1}{k}\right)^{\ell} \right)  \sum_{i=1}^{|X|-1} (r_{x_1,y} + \dots + r_{x_i,y}) (W(y|x_i) - W(y|x_{i+1})) + W(y|x_{|X|}) \\
&\geq \left( 1 - \left(1 - \frac{1}{k}\right)^{\ell} \right)  \sum_{i=1}^{|X|} r_{x_i,y} W(y|x_i) \ ,
\end{align}
where we used again in the second inequality the constraint that $\sum_{x} r_{x,y} \leq 1$. This proves the claimed inequality and thus by summing over $y \in Y$, we obtain
\begin{align*}
\exc{S}{\frac{1}{\ell} \sum_{y} \max_{x \in S} W(y|x)} \geq \frac{k}{\ell} \left( 1 - \left(1 - \frac{1}{k}\right)^{\ell} \right) \channel^{\NS}(W, k) \ .
\end{align*}
\end{proof}

\subsection{Tightness of Theorem~\ref{thm:ns-advantage}}
\label{sec:tightness}

We now prove that the result shown in Theorem~\ref{thm:ns-advantage} is tight using a simple family of graphs proposed in~\cite{AS04_pipage}.
Consider the following channel for $k,t \geq 1$ integers. The input alphabet $X$ is composed of $n \eqdef kt$ symbols and the output alphabet $Y$ is composed of $\binom{n}{t}$ symbols that we interpret as subsets of $X$ of size $t$. On input $x$, the output of the channel is a randomly chosen $y$ such that $x \in y$. 
\begin{align}
W(y|x) = \left\{ \begin{array}{cl}
\frac{1}{\binom{n-1}{t-1}} & \text{ if $x \in y$} \\
0 & \text{ otherwise.}
\end{array}
\right.
\end{align}
Note that, interestingly, the case $k = t = 2$ is exactly the channel that is studied in~\cite{PLMKR11}, in which it is experimentally demonstrated that entanglement assistance can help in improving the success probability for sending a bit over this channel.

We first show that $\channel^{\NS}(W,k) = 1$. For this let $p_x = \frac{k}{n}$ for all $x \in X$ and $r_{x,y} = \frac{k}{n}$ if $x \in y$ and $r_{x,y} = 0$ otherwise. We have $\sum_{x} r_{x,y} = t \frac{k}{n} = 1$. Moreover,
\begin{align}
\frac{1}{k} \sum_{x,y} W(y|x) r_{x,y} = \frac{1}{k} \sum_{y} \sum_{x \in y} \frac{1}{\binom{n-1}{t-1}} \frac{k}{n} = \frac{1}{k} \binom{n}{t} \cdot t \frac{1}{\binom{n-1}{t-1}} \frac{k}{n} = 1 \ .
\end{align}
Using the symmetry of the channel it is simple to determine $\channel(W, \ell)$ for any $1 \leq \ell \leq n$. In fact, one can see that $f_{W}(S)$ only depends on the size of $S$, let $S_{\ell}$ be any fixed set of size $\ell$:
\begin{align}
\channel(W, \ell) &= \frac{1}{\ell} \max_{|S| = \ell} f_{W}(S) = \frac{1}{\ell} \sum_{y \in Y : y \cap S_{\ell} \neq \emptyset} \frac{1}{\binom{n-1}{t-1}} \ .
\end{align}
So we only need to count the number of subsets $y$ that intersect with the set $S_{\ell}$ of size $\ell$. This number is given by $\binom{n}{t} - \binom{n-\ell}{t}$. Observing that $\binom{n-1}{t-1} = \frac{t}{n} \binom{n}{t}$ and $t/n = 1/k$ we have
\begin{align}
\channel(W, \ell) = \frac{k}{\ell}\left(1 - \frac{\binom{n-\ell}{t}}{\binom{n}{t}} \right) &= \frac{k}{\ell}\left(1 - \frac{\binom{n-t}{\ell}}{\binom{n}{\ell}} \right) \\
&= \frac{k}{\ell}\left(1 - \frac{(n-t) \cdots (n-\ell - t+1)}{n (n-1) \cdots (n-\ell+1)} \right) \\
&= \frac{k}{\ell} \left( 1 - \left(1-\frac{t}{n}\right) \cdots \left( 1 - \frac{t}{n-\ell+1}\right) \right) \ .
\end{align}
From this expression, we see that for example by fixing $\ell$ and $k$ to be constants and letting $t \to \infty$. This expression approaches
\begin{align}
\frac{k}{\ell} \left( 1 - \left(1-\frac{1}{k} \right)^{\ell} \right) \ ,
\end{align}
which exactly matches the bound in Theorem~\ref{thm:ns-advantage}.

\section{Discussion}

The main message of this work is to draw a connection between the study of optimal coding for noisy channels in information theory and algorithmic aspects of submodular maximization. We believe this connection could be fruitful in both directions. As we showed in this paper, techniques developed in the context of submodular maximization---and, in general, approximation algorithms---can have interesting implications when analyzing the problem of optimal channel coding. We believe there are many more relevant applications to be explored. A particular question is whether algorithmic techniques can be helpful in obtaining better finite-blocklength bounds for well-studied channels such as the ones given in~\cite{Pol10}.

\section*{Acknowledgements}

We would like to thank Mario Berta, Shaddin Dughmi, Yury Polyanskiy, Volkher Scholz, Piyush Srivastava, Vincent Tan, Marco Tomamichel as well as the ISIT referees for discussions and comments on this paper.

\appendix

\section{Non-signaling assisted channel coding}
\label{sec:lp_ns}
Suppose that the sender and the receiver have a box shared between them with the following properties. Alice inputs $\alpha$ and receives output $a$ and Bob inputs $\beta$ and receives output $b$. We say that such a box is non-signaling if by itself it is useless for communication. More formally, such a box is described by a conditional probability distribution $P(a,b|\alpha,\beta)$ representing the probability that the outputs are $a$ and $b$ given that the inputs are $\alpha$ and $\beta$. The non-signaling property is then easily formulated as a linear constraint on these numbers: the marginal distribution on $a$ is independent of the input $\beta$ of Bob and the marginal distribution on $b$ is independent of the input $\alpha$ of Alice
\begin{align}
P(a|\alpha, \beta) &\eqdef \sum_{b} P(a,b|\alpha,\beta) = P_A(a | \alpha) \\
P(b|\alpha, \beta) &\eqdef \sum_{a} P(a,b|\alpha,\beta) = P_B(b | \beta) \ ,
\end{align}
for some conditional distributions $P_A$ and $P_B$. In other words, $P(a| \alpha, \beta)$ is the same for all values of $\beta$.
Perhaps the simplest example of a non-signaling box is one that provides shared randomness. In this case $P(a,a'|\alpha, \beta) = \frac{\delta_{a=a'}}{A}$. Also, if Alice and Bob share a quantum state (which could be entangled) and perform measurements that depend on their inputs, the outputs being the measurement outcomes, this also defines a non-signaling box. There are also some distributions that are non-signaling but do not seem to be physically realizable without communication, the most well-known being the Popescu-Rohrlich box~\cite{PR94}.

Now assume that the sender and the receiver share such a box. The sender may give an input depending on the message he wishes to send and use the output he receives from the box to choose the symbol $x \in X$. Similarly, the receiver can give an input $\beta$ that depends on the symbol $y$ he receives and might use the output to decode the message. By encompassing all pre- and post-processing of the sender and the receiver into the box itself, we can assume that Alice inputs the message $i$ into the box and the output of the box is exactly the input to the channel. Similarly, Bob inputs $y$ into the box and receives $j$, a candidate for the sent message.

Given this definition, we can naturally define the non-signaling success probability as
\begin{align}\label{eq:channel_lp_def}
\begin{aligned}
\channel^{\NS}(W, k)\eqdef\;& \underset{P(x,j|i,y), P_A, P_B}{\text{maximize}}
& & \frac{1}{k} \sum_{x,y,i} W(y|x) P(x,i|i,y) \\
& \text{subject to}
& &\sum_{j} P(x,j|i,y) = P_{A}(x|i) \quad \forall i,x,y \in [k] \times X \times Y \\
& 
& &\sum_{x} P(x,j|i,y) = P_{B}(j|y)  \quad \forall j,i,y \in [k] \times [k] \times Y\\
& 
& &  \sum_{x,j} P(x,j|i,y) = 1 \quad \forall i,y \in [k] \times Y\\ 
&
& & 0 \leq P(x,j|i,y)  \quad \forall j,i,x,y \in [k] \times [k] \times X \times Y\ .
\end{aligned}
\end{align}

We here prove that this linear program and the one in~\eqref{eq:channel_lp} have the same value. 

Given a box $P$, let us construct a feasible solution for~\eqref{eq:channel_lp}. Let $r_{x,y} = \sum_{i} P(x,i|i,y)$ and $p_x = \sum_i P_A(x|i)$. Then clearly $\sum_{x} r_{x,y} = \sum_{i} P_{B}(i|y) = 1$ and $\sum_x p_x = \sum_{x,i} P_A(x|i) = k$. We still need to show that we can assume $p_x \leq 1$. For this, define $p'_x = \min\{p_x, 1\}$. As $r_{x,y} \leq 1$, we still have $r_{x,y} \leq p'_x$. But the sum $\sum_{x} p'_x$ might be less than $k$. But as $k \leq |X|$, there exists $\bar{p}_x \geq p'_x$ while keeping $\bar{p}_x \leq 1$ satisfying $\sum_{x} \bar{p}_x = k$. The pair $(r_{x,y}, \bar{p}_x)$ is then a feasible solution for~\eqref{eq:channel_lp} with objective function equal to $\frac{1}{k} \sum_{x,y,i} W(y|x) P(x,i|i,y)$.

For the other direction, define 
\begin{align}
P(x,j|i,y) = \left\{ \begin{array}{cl}
\frac{r_{x,y}}{k} & \text{ if } i = j \\
\frac{p_x - r_{x,y}}{k(k-1)} & \text{ if } i \neq j \ .
\end{array}
\right .
\end{align}
It is simple to see that this distribution defines a non-signaling box.

\section{Interpreting the LP in terms of hypothesis testing}
\label{sec:lp_ht}

Consider two distributions $P$ and $Q$ over the set $Z$. Given a sample from either $P$ or $Q$, we wish to determine which distribution generated the sample. One can define a randomized test $T : Z \to [0,1]$ where $T(z)$ is the probability of declaring the distribution to be $P$. An important quantity that is studied in statistical hypothesis testing is the smallest probability to falsely outputting $P$ among all tests that correctly identify $P$ with probability at least $\alpha$: More precisely,
\begin{align}
\beta_{\alpha}(P, Q) = \min_{T : Z \to [0,1], \sum_{z} P(z) T(z) \geq \alpha} \sum_{z} Q(z) T(z) \ .
\end{align}
We will see that the quantity $\channel^{\NS}(W,k)$ is related to $\beta_{\alpha}(P, Q)$ for distributions $P$ and $Q$ constructed from the channel $W$. More specifically, let $\mu \in \dist(X)$ be a distribution on inputs $X$ and $\nu \in \dist(Y)$ be a distribution on outputs $Y$. Then let the distribution $\mu \cdot W$ on $X \times Y$ be defined by probabilities $\mu(x) W(y|x)$ and $\mu \cdot \nu$ be the distribution defined by probabilities $\mu(x) \nu(y)$. Then the maximum success probability $\channel^{\NS}(W,k)$ can be interpreted as a distance between the product distribution $\mu \cdot \nu$ and the distribution induced by the channel $\mu \cdot W$ in the following sense: 
\begin{proposition}
\begin{align}
1 - \channel^{\NS}(W, k) = \min_{\mu \in \dist(X)} \max_{\nu \in \dist(Y)} \beta_{1-1/k}(\mu \cdot \nu, \mu \cdot W) \ .
\end{align}
\end{proposition}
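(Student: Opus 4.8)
The plan is to massage the linear program \eqref{eq:channel_lp} into the hypothesis-testing form by recognizing that, for a fixed input distribution $\mu$, the variables $p_x$ should play the role of $k \mu(x)$ and the variables $r_{x,y}$ should encode the test $T$. Concretely, I would start from the right-hand side and rewrite $\beta_{1-1/k}(\mu\cdot\nu,\mu\cdot W)$ using its definition: it is the minimum over tests $T:X\times Y\to[0,1]$ with $\sum_{x,y}\mu(x)\nu(y)T(x,y)\ge 1-1/k$ of $\sum_{x,y}\mu(x)W(y|x)T(x,y)$. Taking the $\max$ over $\nu$ of this minimum, by LP duality (or a minimax argument, since the constraint set is convex-compact and the objective bilinear in $(\nu,T)$), exchanging $\max_\nu$ and $\min_T$ is legitimate; the maximization over $\nu$ of the linear functional $\sum_y \nu(y)\big(\sum_x \mu(x)T(x,y)\big)$ subject to $\nu$ a distribution is just $\max_y \sum_x \mu(x)T(x,y)$. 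So the inner quantity becomes $\min_T \sum_{x,y}\mu(x)W(y|x)T(x,y)$ subject to $\sum_x \mu(x)T(x,y)\le 1-1/k$ for every $y$.

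Next I would complement the test: write $T(x,y)=1-S(x,y)$ with $S\in[0,1]$, so $\sum_{x,y}\mu(x)W(y|x)T(x,y)=1-\sum_{x,y}\mu(x)W(y|x)S(x,y)$, and the constraint $\sum_x\mu(x)T(x,y)\le 1-1/k$ becomes $\sum_x\mu(x)S(x,y)\ge 1/k$. Hence
\[
\min_T(\cdots)=1-\max_{S:\,\sum_x\mu(x)S(x,y)\ge 1/k\ \forall y}\ \sum_{x,y}\mu(x)W(y|x)S(x,y).
\]
Now take $\min_\mu$ on the outside, i.e.\ $1-\max_\mu\max_S(\cdots)$, and substitute $r_{x,y}=k\,\mu(x)S(x,y)$ and $p_x=k\,\mu(x)$. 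The constraint $\sum_x\mu(x)S(x,y)\ge 1/k$ becomes $\sum_x r_{x,y}\ge 1$, which combined with $S\le 1$ (giving $r_{x,y}\le p_x$) and optimality can be taken to be $\sum_x r_{x,y}\le 1$ at the optimum, since reducing an over-large $r_{x,y}$ only decreases the objective we are maximizing — so at optimum $\sum_x r_{x,y}=1$. Also $\sum_x p_x=k\sum_x\mu(x)=k$, $p_x=k\mu(x)\le 1$ is not automatic but one checks that forcing $p_x\le 1$ loses nothing exactly as in the footnote after \eqref{eq:channel_lp} (here $k\le|X|$ is assumed). The objective $\sum_{x,y}\mu(x)W(y|x)S(x,y)=\frac1k\sum_{x,y}W(y|x)r_{x,y}$, and maximizing over all valid $(\mu,S)$ is exactly maximizing over all valid $(r,p)$ in \eqref{eq:channel_lp}, which equals $\channel^{\NS}(W,k)$. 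Putting the pieces together yields $\min_\mu\max_\nu\beta_{1-1/k}(\mu\cdot\nu,\mu\cdot W)=1-\channel^{\NS}(W,k)$.

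The main obstacle is the minimax exchange $\max_\nu\min_T=\min_T\max_\nu$: one must verify the hypotheses of Sion's minimax theorem (the set of tests $T$ is convex and compact in $[0,1]^{X\times Y}$, the set of distributions $\nu$ is convex and compact, and the objective is bilinear hence continuous and concave–convex in the two arguments separately), which is routine but is the one place where a genuine theorem rather than algebra is invoked. A secondary nuisance is bookkeeping the equality-vs-inequality forms of the constraints ($\sum_x r_{x,y}\le 1$ versus $=1$, and the $p_x\le1$ clause), but each such step is a short optimality argument of the kind already used in the paper.
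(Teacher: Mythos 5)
Your overall route is the same as the paper's: collapse $\max_{\nu}\beta_{1-1/k}(\mu\cdot\nu,\mu\cdot W)$ into a single minimization over tests with a per-$y$ constraint, then complement the test and change variables $p_x=k\mu(x)$, $r_{x,y}=k\mu(x)(1-T(x,y))$ to land on the LP \eqref{eq:channel_lp}. But there is a sign error that, as written, is fatal. The constraint defining $\beta_{1-1/k}(\mu\cdot\nu,\mu\cdot W)$ is $\sum_{x,y}\mu(x)\nu(y)T(x,y)\ge 1-\frac{1}{k}$, and the adversarial $\nu$ concentrates on the $y$ that \emph{minimizes} $\sum_x\mu(x)T(x,y)$ (that is what makes the constraint hardest to satisfy), so the combined program is $\min_T\sum_{x,y}\mu(x)W(y|x)T(x,y)$ subject to $\sum_x\mu(x)T(x,y)\ge 1-\frac{1}{k}$ for every $y$ --- with $\ge$, not $\le$ as you wrote. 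With the correct sign, complementing gives $\sum_x\mu(x)S(x,y)\le\frac{1}{k}$, i.e.\ $\sum_x r_{x,y}\le 1$, which is exactly the constraint of \eqref{eq:channel_lp}; no optimality patch is needed. With your sign you arrive at $\sum_x r_{x,y}\ge 1$, and the patch you propose points the wrong way: in a maximization a constraint $\sum_x r_{x,y}\ge 1$ is slack, the optimum is $r_{x,y}=p_x$ for all $x,y$, the objective evaluates to $\frac{1}{k}\sum_x p_x=1$, and you would conclude that the right-hand side of the proposition is identically $0$, which is false.

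A second, smaller issue is the justification of the exchange. As posed, $\nu$ appears only in the feasible set of the inner minimization and not in the objective, so the problem is not in the bilinear saddle-point form to which Sion's theorem applies off the shelf. One direction is free: a $T$ satisfying the constraint for every point mass $\delta_y$ satisfies it for every $\nu$ by averaging, so the per-$y$-constrained minimum is at least $\max_\nu\beta_{1-1/k}$. The reverse inequality is where a real argument is needed; the paper gets it by writing the LP dual of $\beta_\alpha(\mu\cdot\nu,\mu\cdot W)$ and absorbing $\nu$ into the dual variable via $\lambda_1(y)=\lambda_1\nu(y)$, which is the clean way to make your "exchange" rigorous. (Alternatively, apply a minimax theorem to the Lagrangian after dualizing the single constraint.) Once these two points are repaired, your change of variables coincides with the paper's proof, including the treatment of the $p_x\le 1$ normalization under the standing assumption $k\le|X|$.
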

\begin{proof}
Given a feasible solution $p_x$ and $r_{x,y}$ for \eqref{eq:channel_lp}, we define a distribution $\mu(x) = p_x/k$ and a test $T(x,y) = 1 - \frac{r_{x,y}}{p_x}$. The constraints in \eqref{eq:channel_lp} readily give $\sum_{x} \mu(x) = 1$ and $T(x,y) \in [0,1]$ for all $x,y$. We then have that for any distribution $\nu$ on $Y$,  
\begin{align}
\sum_{x,y} \mu(x) \nu(y) T(x,y) &\geq \min_{y} \sum_{x} \frac{p_x}{k} \cdot (1 - \frac{r_{x,y}}{p_x}) \\
 &= \min_{y} 1 - \frac{1}{k} \sum_{x} r_{x,y} \\
 &\geq 1 - \frac{1}{k} \ .
\end{align}
In addition, we have
\begin{align}
\sum_{x, y} \mu(x) W(y|x) T(x,y) &= 1 - \frac{1}{k} \sum_{x,y} r_{x,y} W(y|x) \\
&\geq 1 - \channel^{\NS}(W, k) \ .
\end{align}
As a result, there exists $\mu$ such that for all $\nu$, $\beta_{1-1/k}(\mu \cdot \nu, \mu \cdot W) \leq 1 - \channel^{\NS}(W,k)$.

For the other direction, we first show using linear programming duality that for any $\mu$,
\begin{align}
\label{eq:exchange_nu_T}
\max_{\nu \in \dist(Y)} \beta_{1-1/k}(\mu \cdot \nu, \mu \cdot W)
&= \min_{\substack{T : X \times Y \to [0,1] \\ \forall y, \; \sum_{x} \mu(x) T(x,y) \geq 1 - \frac{1}{k}}} \sum_{x,y} \mu(x) W(y|x) T(x,y) \ .
\end{align}
In order to see this, observe that $\beta_{\alpha}(\mu \cdot \nu, \mu \cdot W)$ is a linear program and thus using duality can also be written as a maximization program. In fact, 
\begin{align}
\beta_{\alpha}(\mu \cdot \nu, \mu \cdot W) &= \max_{\substack{\lambda_1 \geq 0, \lambda_2(x,y) \geq 0 \\ \mu(x)W(y|x) + \lambda_2(x,y) \geq \lambda_1 \mu(x) \nu(y)}} \lambda_1 \alpha - \sum_{x,y} \lambda_2(x,y)
\end{align}
As a result,
\begin{align}
\max_{\nu \in \dist(Y)} \beta_{\alpha}(\mu \cdot \nu, \mu \cdot W) &= \max_{\substack{\lambda_1 \geq 0, \lambda_2(x,y) \geq 0 \\ \mu(x)W(y|x) + \lambda_2(x,y) \geq \lambda_1 \mu(x) \nu(y) \\ \nu(y) \geq 0, \sum_y \nu(y) = 1}} \lambda_1 \alpha - \sum_{x,y} \lambda_2(x,y) \\
&= \max_{\substack{\lambda_1(y) \geq 0, \lambda_2(x,y) \geq 0 \\ \mu(x)W(y|x) + \lambda_2(x,y) \geq \lambda_1(y) \mu(x) }} \alpha \sum_{y} \lambda_1(y) - \sum_{x,y} \lambda_2(x,y) \ ,
\end{align}
where we simply set $\lambda_1(y) = \lambda_1 \nu(y)$. To conclude the proof of \eqref{eq:exchange_nu_T}, it suffices to observe that this last expression is nothing but the dual program for the right hand side of \eqref{eq:exchange_nu_T}.

Now given a distribution $\mu$ on $X$ and a test $T$ that satisfies the constraint on the right rand side of~\eqref{eq:exchange_nu_T}, we define $p_x = k \cdot \mu(x)$ and $r_{x,y} = k \cdot \mu(x) (1-T(x,y))$. Then the condition $\sum_{x} \mu(x) T(x,y) \geq 1 - \frac{1}{k}$ translates to $\sum_{x} \mu(x) - \frac{1}{k} \sum_{x} r_{x,y} \geq 1-\frac{1}{k}$. In other words, $\sum_{x} r_{x,y} \leq 1$. In addition, we clearly have $\sum_{x} p_{x} = k$ and $r_{x,y} \leq p_x$. This concludes the proof of the claim.
\end{proof}

  \bibliographystyle{abbrv}
  
  \bibliography{big}

\end{document}